\documentclass[12pt,reqno]{amsart}

\usepackage[T1]{fontenc}
\usepackage{lmodern}
\usepackage{microtype}
\usepackage{geometry}
\usepackage{amsmath,amssymb,amsthm,mathtools}
\usepackage{enumitem}
\usepackage{xcolor}
\usepackage[colorlinks=true,linkcolor=blue!45!black,citecolor=blue!45!black,urlcolor=blue!45!black]{hyperref}

\allowdisplaybreaks
\setlist[itemize]{topsep=3pt,itemsep=2pt,parsep=0pt}
\setlist[enumerate]{topsep=3pt,itemsep=2pt,parsep=0pt}

\newtheorem{theorem}{Theorem}[section]
\newtheorem{proposition}[theorem]{Proposition}
\newtheorem{corollary}[theorem]{Corollary}
\newtheorem{lemma}[theorem]{Lemma}
\theoremstyle{definition}
\newtheorem{definition}[theorem]{Definition}
\newtheorem{example}[theorem]{Example}
\theoremstyle{remark}
\newtheorem{remark}[theorem]{Remark}

\newcommand{\kk}{\Bbbk}
\newcommand{\Der}{\operatorname{Der}}
\newcommand{\Sym}{\operatorname{Sym}}
\newcommand{\id}{\operatorname{id}}
\newcommand{\tr}{\operatorname{tr}}
\newcommand{\Pcal}{\mathcal P}
\newcommand{\Fcot}{F_{\mathrm{cot}}}

\title[From Lie--Rinehart algebras to \texorpdfstring{$F$}{F}-manifold algebras]{From Lie--Rinehart Algebras to \texorpdfstring{$F$}{F}-Manifold Algebras}

\author{Yufeng Pei}
\address{Department of Mathematics, Huzhou Normal University, Huzhou 313000, Zhejiang, China}
\email{pei@huznu.edu.cn}

\author{Yunhe Sheng}
\address{Department of Mathematics, Jilin University, Changchun 130012, Jilin, China}
\email{shengyh@jlu.edu.cn}

\begin{document}

	\begin{abstract}
		For every Lie--Rinehart algebra, we construct an $F$-manifold algebra on the
		direct sum of its base algebra and module.   Contrary to the assertion in
		\cite[Proposition 13.3.26]{LodayVallette}, the resulting structure is generally
		not Poisson.  We determine when powers of the positive-degree ideal in the
		associated symmetric Poisson algebra are Poisson ideals, and relate the
		Leibnizator to the Lie--Rinehart differential.  For a finite projective module
		of constant rank, the trace of the Leibnizator recovers the anchor and yields a
		rigidity result for injective anchors.  We conclude with algebraic and
		geometric examples.
	\end{abstract}
	
	\subjclass[2020]{17B66, 17B63, 53D17}
	\keywords{Lie--Rinehart algebra, $F$-manifold algebra, Hertling--Manin identity, Poisson algebra, Lie algebroid}

	\maketitle

	\section{Introduction}
	
	Lie--Rinehart algebras were introduced by Rinehart as $(R,A)$-Lie algebras
	\cite{Rinehart}.  The present terminology was later used by Huebschmann
	\cite{Huebschmann1990,Huebschmann1998}.  These
	algebras are algebraic counterparts of Lie algebroids \cite{Mackenzie} and
	occur in Poisson geometry, symplectic geometry, and quantization
	\cite{Kosmann1996}.  A Lie--Rinehart algebra
	consists of a commutative algebra $A$, an $A$-module $L$, a Lie bracket on
	$L$, and an anchor $\rho:L\to\Der_\kk(A)$ satisfying the Leibniz rule.  The
	bracket extends to a Poisson bracket on the symmetric algebra $\Sym_A(L)$.
	For the Lie--Rinehart algebra associated with a Lie algebroid $E\to M$, this
	algebra is naturally identified with the fiberwise polynomial functions on
	$E^*$, equipped with the bracket induced by its linear Poisson structure.
	
	Dubrovin introduced Frobenius manifolds as a geometric formulation of the
	Witten--Dijkgraaf--Verlinde--Verlinde associativity equations in
	two-dimensional topological field theory \cite{Dub95}.  Hertling and Manin
	then introduced $F$-manifolds by weakening the conditions of a Frobenius
	manifold \cite{HertlingManin}.  Every Frobenius manifold is an
	$F$-manifold.  Related algebraic and geometric aspects of $F$-manifolds
	have been studied in \cite{BCMM,CT,CLD,DS11,Laubie,LPR11,Merku,TV}.
	Dotsenko introduced $F$-manifold algebras and identified their operad with the
	associated graded operad of the pre-Lie operad under the filtration by powers
	of the Lie bracket \cite{Dot}.  It was shown in \cite{LiuShengBai2020} that
	$F$-manifold algebras arise as semiclassical limits of pre-Lie deformations
	of commutative associative algebras.
	
	Our main result associates an $F$-manifold algebra with every Lie--Rinehart
	algebra $(A,L)$.  On $A\oplus L$, we use the product
	$(a+X)\bullet(b+Y)=ab+aY+bX$ and the semidirect Lie bracket $[-,-]$ determined by the
	anchor and the bracket on $L$.  The Leibnizator
$$
\Pcal_{a+X}(b+Y,c+Z):=[a+X,(b+Y)\bullet(c+Z)]-(b+Y)\bullet[a+X,c+Z]-(c+Z)\bullet[a+X,b+Y]
$$
is given by
	\[
	\Pcal_{a+X}(b+Y,c+Z)=\rho(Y)(a)Z+\rho(Z)(a)Y.
	\]
	This formula reduces the Hertling--Manin identity to the derivation property
	of the anchor.  In general, the Leibnizator does not vanish, so the resulting
	algebra need not to be Poisson.
	
	As commutative algebras, $A\oplus L$ is isomorphic to
	$\Sym_A(L)/I^2$, where $I$ is the positive-degree ideal of the commutative associative
	algebra $\Sym_A(L)$.  The ideal $I^2$ need not to be a Poisson ideal.  We extend
	this observation to every power $I^n$.  If $L$ is a finite projective
	$A$-module of positive constant rank, then $I^n$ is a Poisson ideal for some
	$n\geq1$ (equivalently, for every $n\geq1$) if and only if the anchor is zero.
	We also express the
	Leibnizator in terms of the Lie--Rinehart differential.  The trace of the
	Leibnizator recovers the anchor and gives a rigidity statement when the
	anchor is injective.
	
%

	The paper is organized as follows.  In Section \ref{sec:construction}, we
	construct the $F$-manifold algebra on $A\oplus L$, compute its Leibnizator,
	and prove the restricted recognition result.  In Section
	\ref{sec:ideal}, we study the quotients of $\Sym_A(L)$ by powers of the
	positive-degree ideal and recover the anchor from the Leibnizator.  In
	Section \ref{sec:examples}, we give algebraic and geometric examples.
	
	Throughout, $\kk$ is a field of characteristic zero.  All commutative algebras
	are assumed to be unital, and all tensor products are over $\kk$ unless
	otherwise stated.
	
	\section{The associated \texorpdfstring{$F$}{F}-manifold algebra}
	\label{sec:construction}
	
	In this section, we show that $A\oplus L$ is an \texorpdfstring{$F$}{F}-manifold algebra associated to any Lie--Rinehart algebra $(A,L)$.
	
	\begin{definition}
		Let $A$ be a commutative associative $\kk$-algebra.  A \textbf{Lie--Rinehart
			algebra over $A$} consists of an $A$-module $L$, a $\kk$-Lie bracket
		$[-,-]_L$ on $L$, and an $A$-linear Lie algebra homomorphism
		$
		\rho:L\longrightarrow \Der_{\kk}(A),
		$
		called the anchor, such that
		\begin{equation}\label{eq:LR}
			[X,aY]_L=a[X,Y]_L+\rho(X)(a)Y
		\end{equation}
		for all $a\in A$ and $X,Y\in L$.
	\end{definition}
	
	We denote a Lie--Rinehart algebra by $(A,L,[-,-]_L,\rho)$, or simply by
	$(A,L)$.

	\begin{example}[Ordinary Lie algebras]
		If $A=\kk$, then $\Der_\kk(A)=0$ and a Lie--Rinehart algebra over $A$ is
		just an ordinary Lie algebra $L$ over $\kk$.  The $A$-module structure is the
		usual scalar multiplication and the anchor is zero.
	\end{example}
	
	\begin{example}[Derivation Lie--Rinehart algebra]
		For every commutative algebra $A$, the pair
		\[
		(A,\Der_\kk(A))
		\]
		is a Lie--Rinehart algebra.  $\Der_\kk(A)$ has the natural $A$-module structure, the
		anchor is the identity map, and the Lie bracket is the usual commutator
		bracket.
	\end{example}
	
	\begin{example}[Lie $A$-algebras]
		If the anchor is zero, then \eqref{eq:LR} says
		\[
		[X,aY]_L=a[X,Y]_L.
		\]
		Thus a zero-anchor Lie--Rinehart algebra is a Lie algebra object in
		$A$-modules.
	\end{example}
	
	A Poisson algebra is a commutative associative algebra $A$ equipped with a
	Lie bracket $\{-,-\}$ satisfying
	\[
	\{f,gh\}=\{f,g\}h+g\{f,h\},
	\qquad f,g,h\in A.
	\]
	
	\begin{example}[K\"ahler differentials of a Poisson algebra]
		Let $(A,\{-,-\})$ be a Poisson algebra and let $\Omega^1_{A/\kk}$ be the
		module of K\"ahler differentials.  Then $(A,\Omega^1_{A/\kk})$ is a
		Lie--Rinehart algebra whose anchor $\pi^\sharp$ is determined by
		\[
		\pi^\sharp(df)(g)=\{f,g\},
		\]
		and the Lie bracket is determined by
		\[
		[df,dg]_\pi=d\{f,g\}.
		\]
	\end{example}

	\begin{definition}
		An \textbf{$F$-manifold algebra} \cite{Dot,HertlingManin,LiuShengBai2020} is a
		triple $(B,\bullet,[-,-])$ such that
		$(B,\bullet)$ is a commutative associative algebra, $(B,[-,-])$ is a Lie
		algebra, and the Leibnizator
		\begin{equation}\label{eq:Leibnizator-def}
			\Pcal_x(y,z):=[x,y\bullet z]-[x,y]\bullet z-y\bullet[x,z]
		\end{equation}
		satisfies the Hertling--Manin identity
		\begin{equation}\label{eq:HM}
			\Pcal_{x\bullet y}(z,w)=x\bullet\Pcal_y(z,w)+y\bullet\Pcal_x(z,w)
		\end{equation}
		for all $x,y,z,w\in B$.
	\end{definition}

	\begin{example}[Poisson algebras]
		Every Poisson algebra is an $F$-manifold algebra.  In this case, the
		Leibnizator vanishes identically.
	\end{example}

	Let $(A,L,[-,-]_L,\rho)$ be a Lie--Rinehart algebra and set $P=A\oplus L$.
	Define a product on $P$ by
	\begin{equation}\label{eq:F-product}
		(a+X)\bullet(b+Y)=ab+aY+bX,
		\qquad a,b\in A,\quad X,Y\in L,
	\end{equation}
	Thus $A$ is a subalgebra of $P$, $L$ is an ideal with
	$L\bullet L=0$, and $a\bullet X=aX$ for $a\in A$ and $X\in L$.
	Define a bracket on $P$ by
	\begin{equation}\label{eq:F-bracket}
		[a+X,b+Y]
		=\rho(X)(b)-\rho(Y)(a)+[X,Y]_L,
		\qquad a,b\in A,\quad X,Y\in L.
	\end{equation}
	Equivalently,
	\[
	[A,A]=0,
	\qquad
	[X,a]=\rho(X)(a),
	\qquad
	[X,Y]=[X,Y]_L.
	\]
	This is the semidirect bracket determined by the anchor action on \(A\) and
	the Lie bracket on \(L\).

	We compute the Leibnizator of the bracket with respect to $\bullet$.
	
	\begin{lemma}\label{lem:Leibnizator-formula}
		For $a+X, b+Y, c+Z\in A\oplus L$, we have
		\begin{equation}\label{eq:Leibnizator-formula}
			\Pcal_{a+X}(b+Y,c+Z)=\rho(Y)(a)Z+\rho(Z)(a)Y.
		\end{equation}
		
	\end{lemma}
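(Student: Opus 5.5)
The plan is a direct computation by multilinearity. By \eqref{eq:Leibnizator-def}, $\Pcal_x(y,z)$ is $\kk$-linear in each of $x,y,z$ and symmetric in $y,z$. The right-hand side of \eqref{eq:Leibnizator-formula} is also trilinear in $(a+X,b+Y,c+Z)$ and symmetric under $(b+Y)\leftrightarrow(c+Z)$. So it suffices to check the identity when each argument lies in $A$ or in $L$. Up to the symmetry in $y,z$, this gives six cases: $x\in\{a,X\}$ and $(y,z)\in\{(b,c),(b,Z),(Y,Z)\}$. Throughout I will use the explicit rules $[A,A]=0$, $[X,a]=\rho(X)(a)$ (hence $[a,X]=-\rho(X)(a)$), $L\bullet L=0$ and $a\bullet X=aX$ from \eqref{eq:F-product} and \eqref{eq:F-bracket}.

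\textbf{Case $x=a\in A$.} For $(b,c)$, every term lies in $[A,A]=0$, so the Leibnizator vanishes. For $(b,Z)$, we get
\[
[a,bZ]-[a,b]\bullet Z-b\bullet[a,Z]=-\rho(bZ)(a)-0+b\rho(Z)(a)=0,
\]
using $A$-linearity of $\rho$. For $(Y,Z)$, we have $Y\bullet Z=0$, so the Leibnizator is
\[
-[a,Y]\bullet Z-Y\bullet[a,Z]=\rho(Y)(a)Z+\rho(Z)(a)Y.
\]
In each subcase this agrees with \eqref{eq:Leibnizator-formula}, whose right side is nonzero only when both $y$ and $z$ have $L$-components.

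\textbf{Case $x=X\in L$.} Here the right side of \eqref{eq:Leibnizator-formula} is zero, since its $A$-component $a$ is $0$. For $(b,c)$, the Leibnizator is
\[
\rho(X)(bc)-\rho(X)(b)c-b\rho(X)(c)=0,
\]
because $\rho(X)$ is a derivation. For $(b,Z)$, it is
\[
[X,bZ]_L-\rho(X)(b)Z-b[X,Z]_L=0,
\]
which is exactly the Leibniz rule \eqref{eq:LR}. For $(Y,Z)$, all three terms vanish since $L\bullet L=0$.

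The only delicate point is sign bookkeeping in the bracket between $A$ and $L$, namely $[a,Z]=-\rho(Z)(a)$. One must confirm that the two cancellations in the $x=a$ case use $A$-linearity of the anchor, while those in the $x=X$ case use the derivation property and \eqref{eq:LR}. Once the case reduction is justified by trilinearity and symmetry, everything else is routine.
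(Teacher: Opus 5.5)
Your proof is correct and is essentially the paper's argument: a direct computation from \eqref{eq:F-product} and \eqref{eq:F-bracket}, using exactly the derivation property of $\rho(X)$, the $A$-linearity of $\rho$, and \eqref{eq:LR}. The only difference is bookkeeping. The paper expands the three terms for general $a+X,b+Y,c+Z$ at once and cancels, whereas you first reduce by trilinearity and the $y\leftrightarrow z$ symmetry to homogeneous cases, which makes visible which axiom is responsible for each cancellation.
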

	
	\begin{proof}
		We expand the three terms in the definition of $\Pcal$.  First,
		\[
		(b+Y)\bullet(c+Z)=bc+bZ+cY.
		\]
		Using \eqref{eq:F-bracket}, the derivation property of $\rho(X)$, the
		$A$-linearity of the anchor, and the Lie--Rinehart identity, we get
		\begin{align*}
			[a+X,(b+Y)\bullet(c+Z)]
			&= [a+X,bc+bZ+cY] \\
			&=\rho(X)(bc)-\rho(bZ+cY)(a)+[X,bZ+cY]_L \\
			&=\rho(X)(b)c+b\rho(X)(c)-b\rho(Z)(a)-c\rho(Y)(a) \\
			&\quad +b[X,Z]_L+c[X,Y]_L+\rho(X)(b)Z+\rho(X)(c)Y.
		\end{align*}
		Second,
		\begin{align*}
			[a+X,b+Y]\bullet(c+Z)
			&=\bigl(\rho(X)(b)-\rho(Y)(a)+[X,Y]_L\bigr)\bullet(c+Z) \\
			&=c\rho(X)(b)-c\rho(Y)(a)+\rho(X)(b)Z-\rho(Y)(a)Z+c[X,Y]_L.
		\end{align*}
		Third,
		\begin{align*}
			(b+Y)\bullet[a+X,c+Z]
			&=(b+Y)\bullet\bigl(\rho(X)(c)-\rho(Z)(a)+[X,Z]_L\bigr) \\
			&=b\rho(X)(c)-b\rho(Z)(a)+\rho(X)(c)Y-\rho(Z)(a)Y+b[X,Z]_L.
		\end{align*}
		Subtracting the last two displayed expressions from the first cancels all
		terms except
		\[
		\rho(Y)(a)Z+\rho(Z)(a)Y.
		\]
		This proves \eqref{eq:Leibnizator-formula}.
	\end{proof}
	
	\begin{theorem}\label{thm:LR-to-FMan}
		Let $(A,L)$ be a Lie--Rinehart algebra.  Then
		$(A\oplus L,\bullet,[-,-])$ is an $F$-manifold algebra.
	\end{theorem}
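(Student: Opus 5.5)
We need to check three things: $(A\oplus L,\bullet)$ is commutative associative, $[-,-]$ is a Lie bracket, and the Hertling--Manin identity \eqref{eq:HM} holds. The first is the standard square-zero extension: $\bullet$ is visibly symmetric, and expanding $((a+X)\bullet(b+Y))\bullet(c+Z)$ and $(a+X)\bullet((b+Y)\bullet(c+Z))$ both give $abc+abZ+acY+bcX$, using that $L$ is an $A$-module and $L\bullet L=0$.

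For the Lie algebra structure, skew-symmetry is immediate from \eqref{eq:F-bracket}. For the Jacobi identity I would use trilinearity and check it on homogeneous triples. Triples with two or three entries in $A$ give zero, since $[A,A]=0$ and brackets of $A$ with anything land in $A$. Three entries in $L$ reduce to the Jacobi identity of $[-,-]_L$. For $(X,Y,a)$ the identity reads $\rho([X,Y]_L)(a)=\rho(X)\rho(Y)(a)-\rho(Y)\rho(X)(a)$, which is exactly the statement that $\rho$ is a Lie algebra homomorphism. Equivalently, $[-,-]$ is the semidirect product of $L$ acting on the abelian Lie algebra $A$ through $\rho$.

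The main step is the Hertling--Manin identity, and Lemma~\ref{lem:Leibnizator-formula} reduces it to a computation. Set $x=a+X$, $y=b+Y$, $z=c+Z$, $w=d+W$. Then $x\bullet y=ab+(aY+bX)$, and the Lemma gives $\Pcal_{x\bullet y}(z,w)=\rho(Z)(ab)W+\rho(W)(ab)Z$. The Lemma shows that $\Pcal$ depends only on the $A$-component of its subscript and always takes values in $L$. Since $L\bullet L=0$, only the $A$-components $a,b$ of $x,y$ survive in the right-hand side:
\[
x\bullet\Pcal_y(z,w)+y\bullet\Pcal_x(z,w)=a\bigl(\rho(Z)(b)W+\rho(W)(b)Z\bigr)+b\bigl(\rho(Z)(a)W+\rho(W)(a)Z\bigr).
\]
Because $\rho(Z)$ and $\rho(W)$ are derivations of $A$, we have $\rho(Z)(ab)=a\rho(Z)(b)+b\rho(Z)(a)$, and the same for $W$. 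So the two sides agree.

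No step is a real obstacle, since the Lemma does the heavy lifting. The only care needed is in the bookkeeping for $\bullet$: remembering that $\Pcal$ lands in $L$, so the $L$-components of $x$ and $y$ drop out on the right. On the left, the $L$-part $aY+bX$ of $x\bullet y$ plays no role.
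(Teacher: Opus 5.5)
Your proposal is correct and follows essentially the same route as the paper. You verify the square-zero commutative product and the semidirect-product Lie bracket, where the mixed Jacobi case reduces to $\rho$ being a Lie algebra homomorphism. You then use Lemma~\ref{lem:Leibnizator-formula}, the fact that $\bullet$ acts on $L$ only through $A$-components, and the derivation property of $\rho(Z)$ and $\rho(W)$ to match both sides of the Hertling--Manin identity; your explicit associativity and Jacobi case checks just spell out what the paper states more briefly.
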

	
	\begin{proof}
		By \eqref{eq:F-product}, the product $\bullet$ is commutative and associative.
		The bracket is the semidirect product bracket
		for the action $\rho:L\to\Der_\kk(A)$.  It is therefore a Lie bracket: the
		Jacobi identity on $L$ is inherited from $[-,-]_L$, the identity involving two
		elements of $L$ and one element of $A$ follows because $\rho$ is a Lie algebra
		homomorphism, and all remaining components vanish.  It remains to prove the
		Hertling--Manin identity
		\begin{equation}\label{eq:HM-proof-target}
			\Pcal_{(a+X)\bullet (b+Y)}(c+Z,d+W)
			=(a+X)\bullet\Pcal_{b+Y}(c+Z,d+W)+(b+Y)\bullet\Pcal_{a+X}(c+Z,d+W).
		\end{equation}
		
		By Lemma \ref{lem:Leibnizator-formula}, we have
		\begin{align*}
			\Pcal_{(a+X)\bullet (b+Y)}(c+Z,d+W)
			&=\rho(Z)(ab)W+\rho(W)(ab)Z \\
			&=\bigl(a\rho(Z)(b)+b\rho(Z)(a)\bigr)W
			+\bigl(a\rho(W)(b)+b\rho(W)(a)\bigr)Z,
		\end{align*}
		where we used that $\rho(Z)$ and $\rho(W)$ are derivations of $A$.
		
		On the other hand,
		\[
		\Pcal_{b+Y}(c+Z,d+W)=\rho(Z)(b)W+\rho(W)(b)Z\in L,
		\]
		and similarly
		\[
		\Pcal_{a+X}(c+Z,d+W)=\rho(Z)(a)W+\rho(W)(a)Z\in L.
		\]
		Since an element $a+X\in P$ acts by the product $\bullet$ on $L$ only through
		its $A$-component, we have
		\begin{align*}
			(a+X)\bullet\Pcal_{b+Y}(c+Z,d+W)
			&=a\rho(Z)(b)W+a\rho(W)(b)Z,\\
			(b+Y)\bullet\Pcal_{a+X}(c+Z,d+W)
			&=b\rho(Z)(a)W+b\rho(W)(a)Z.
		\end{align*}
		Adding the two equalities gives the expression for
		$\Pcal_{(a+X)\bullet (b+Y)}(c+Z,d+W)$.  Hence \eqref{eq:HM-proof-target} holds.
	\end{proof}

	The following corollary characterizes the Poisson case.
	
	\begin{corollary}\label{cor:poisson-specialization}
		The associated $F$-manifold algebra $(A\oplus L,\bullet,[-,-])$ is a
		Poisson algebra if and only if
		\begin{equation}\label{eq:Poisson-condition}
			\rho(Y)(a)Z+\rho(Z)(a)Y=0
		\end{equation}
		for all $a\in A$ and $Y,Z\in L$.
	\end{corollary}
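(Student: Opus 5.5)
The plan is to reduce the Poisson condition directly to the vanishing of the Leibnizator, and then read off that vanishing from Lemma \ref{lem:Leibnizator-formula}. By Theorem \ref{thm:LR-to-FMan}, $(A\oplus L,\bullet)$ is commutative associative and $(A\oplus L,[-,-])$ is a Lie algebra. A Poisson algebra is exactly such a pair in which the bracket is a derivation of $\bullet$ in each argument, i.e.
\[
[x,y\bullet z]=[x,y]\bullet z+y\bullet[x,z]\quad\text{for all }x,y,z,
\]
which by \eqref{eq:Leibnizator-def} says precisely that $\Pcal_x(y,z)=0$ for all $x,y,z\in A\oplus L$. So the algebra is Poisson if and only if $\Pcal\equiv 0$.

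Next I will invoke Lemma \ref{lem:Leibnizator-formula}: for arbitrary $a+X,\,b+Y,\,c+Z$ we have $\Pcal_{a+X}(b+Y,c+Z)=\rho(Y)(a)Z+\rho(Z)(a)Y$. This expression depends only on $a$, $Y$, $Z$, and each of these ranges over all of $A$, $L$, $L$ independently as the three arguments vary.

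For the forward direction, if $\Pcal\equiv0$, then specializing to the elements $a$, $Y$, $Z$ (that is, taking $X=0$, $b=c=0$) gives \eqref{eq:Poisson-condition}. For the converse, if \eqref{eq:Poisson-condition} holds for all $a,Y,Z$, then the right-hand side of the lemma vanishes for every choice of arguments, so $\Pcal\equiv0$ and the algebra is Poisson.

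There is no genuine obstacle here. The only point needing care is to state explicitly that the Lie and commutative-associative axioms are already supplied by Theorem \ref{thm:LR-to-FMan}, so that the Leibniz rule is the sole remaining Poisson axiom.
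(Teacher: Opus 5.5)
Your proposal is correct and matches the paper's proof: both observe that, with the commutative-associative and Lie axioms already supplied by Theorem \ref{thm:LR-to-FMan}, the Poisson identity is equivalent to the vanishing of the Leibnizator, and Lemma \ref{lem:Leibnizator-formula} then reduces this to \eqref{eq:Poisson-condition}. Your specialization $X=0$, $b=c=0$ in the forward direction only spells out a step the paper leaves implicit.
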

	
	\begin{proof}
		For a commutative associative algebra with a Lie bracket, the Poisson identity
		is equivalent to the vanishing of the Leibnizator.  The assertion therefore
		follows directly from Lemma \ref{lem:Leibnizator-formula}.
	\end{proof}
	
	\begin{remark}\label{rem:LV-poisson-claim}
		It is asserted in \cite[Proposition 13.3.26]{LodayVallette} that $A\oplus L$,
		with the product \eqref{eq:F-product} and bracket \eqref{eq:F-bracket}, is a
		Poisson algebra for an arbitrary Lie--Rinehart algebra.  Corollary
		\ref{cor:poisson-specialization} shows that the additional condition
		\eqref{eq:Poisson-condition} is required.  Theorem
		\ref{thm:LR-to-FMan} gives the valid conclusion without this condition.
	\end{remark}

	\section{Poisson truncations and the Lie--Rinehart differential}
	\label{sec:ideal}
	
	We now compare this construction with truncations of the canonical
	Poisson algebra associated with a Lie--Rinehart algebra.  This comparison
	also identifies the higher-order obstruction to obtaining a Poisson
	quotient.
	
	Let $(A,L)$ be a Lie--Rinehart algebra and write
	\[
	\Sym_A(L)=\bigoplus_{n\geq0}\Sym_A^n(L).
	\]
	We use the following standard linear Poisson construction; see, for example,
	\cite{Huebschmann1990,Huebschmann1998}.
	
	\begin{proposition}\label{prop:linear-poisson-prelim}
		The symmetric algebra $\Sym_A(L)$ carries a unique Poisson bracket
		determined on generators by
		\begin{equation}\label{eq:linear-poisson}
			\{a,b\}=0,
			\qquad
			\{X,a\}=\rho(X)(a),
			\qquad
			\{X,Y\}=[X,Y]_L,
		\end{equation}
		for $a,b\in A$ and $X,Y\in L$.  Moreover,
		\[
		\{\Sym_A^p(L),\Sym_A^q(L)\}\subseteq \Sym_A^{p+q-1}(L).
		\]
		Here $\Sym_A^{-1}(L)=0$.
	\end{proposition}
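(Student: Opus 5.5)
Uniqueness comes first, since it is formal. $\Sym_A(L)$ is generated as a commutative $\kk$-algebra by $A$ and $L$. A Poisson bracket is a biderivation, so it is determined by its values on generators. Hence any two Poisson brackets satisfying \eqref{eq:linear-poisson} coincide.

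For existence, I would construct the bracket in two stages, as a Hamiltonian-type action followed by antisymmetrization.

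\textbf{Step 1: the action of a single generator.} For $X\in L$, define a $\kk$-linear derivation $D_X$ of $\Sym_A(L)$ by
\[
D_X(aY_1\cdots Y_n)=\rho(X)(a)\,Y_1\cdots Y_n+\sum_i aY_1\cdots[X,Y_i]_L\cdots Y_n .
\]
Well-definedness is the main obstacle. $\Sym_A^n(L)$ is a quotient of $L^{\otimes_\kk n}\otimes A$ by the relations that move scalars $a$ across factors and permute factors. The assignment is symmetric, so the permutation relations are harmless. For the $A$-balancing relation $Y_1\cdots(bY_i)\cdots=bY_1\cdots Y_i\cdots$, the Lie--Rinehart identity \eqref{eq:LR} gives $[X,bY_i]_L=b[X,Y_i]_L+\rho(X)(b)Y_i$. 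This extra term is exactly what $\rho(X)$ produces when it acts on the coefficient $b$ via the Leibniz rule, so the two sides agree. $D_X$ is then a $\kk$-derivation of the product, it preserves degree, and $D_{aX}=aD_X$ fails only by terms involving $\rho$ acting on other factors.

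\textbf{Step 2: the full bracket.} I would define
\[
\{aX_1\cdots X_p,\,F\}=\sum_i aX_1\cdots\widehat{X_i}\cdots X_p\,D_{X_i}(F)-\rho(\cdots)\text{-terms},
\]
where the correction terms account for $a$. The cleaner route is to build $\{-,-\}$ as the unique biderivation extending the generator values. A biderivation $\{-,-\}$ on a commutative algebra generated by a set $S$ exists if and only if the values on $S$ are compatible with the relations among generators. So it suffices to check compatibility with the relations $a\cdot Y=aY$ (the module action) and $a\cdot b=ab$ in $A$, in each slot. Compatibility in the second slot is exactly the check made in Step 1. Compatibility in the first slot amounts to requiring $\{aX,F\}=a\{X,F\}+X\{a,F\}$ with $\{a,Y\}=-\rho(Y)(a)$, which is consistent because $\rho$ is $A$-linear, so $\rho(aX)=a\rho(X)$. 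I would then verify skew-symmetry and the Jacobi identity on generators. For a biderivation, the Jacobiator is a triderivation, so checking it on generators suffices. On generator triples, Jacobi follows from three facts: Jacobi for $[-,-]_L$ when all three entries lie in $L$; $\rho$ being a Lie homomorphism when the triple is $(X,Y,a)$; and the vanishing $\{a,b\}=0$ in the remaining cases.

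\textbf{Step 3: the degree bound.} This follows from the generator values. Elements of $A$ have degree $0$, elements of $L$ have degree $1$, and every generator bracket lowers total degree by exactly one: $\{1,0\}\to0$, $\{1,1\}\to1$, $\{0,0\}\to0$, the last being zero anyway. The Leibniz rule in each argument then propagates the bound to $\{\Sym^p_A(L),\Sym^q_A(L)\}\subseteq\Sym^{p+q-1}_A(L)$. When $p=q=0$ the bracket vanishes, which matches the convention $\Sym^{-1}_A(L)=0$.

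I expect the well-definedness check over $A$, the balancing through \eqref{eq:LR} in Step 1 and its first-slot counterpart in Step 2, to be the only nontrivial point.
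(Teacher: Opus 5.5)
The paper does not prove this proposition. It quotes it as the standard linear Poisson construction and refers to Huebschmann. Your argument is a correct, self-contained proof.

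\textbf{Why the argument works.} Write $\Sym_A(L)$ as the quotient of the free commutative algebra $\Sym_\kk(A\oplus L)$ by the ideal generated by $a\cdot b-ab$, $1_A-1$ and $a\cdot Y-aY$.
\begin{itemize}
\item The generator values are $\kk$-bilinear and skew-symmetric, so they extend uniquely to a biderivation on the free algebra with values in $\Sym_A(L)$.
\item This biderivation descends exactly when it kills each relation paired with each generator. These checks reduce to three facts: $\rho(X)$ is a derivation, $\rho$ is $A$-linear, and \eqref{eq:LR} holds. This is what your Step 1 carries out.
\item Skew-symmetry follows because $\{f,g\}+\{g,f\}$ is a symmetric biderivation vanishing on generators.
\item Once skew-symmetry is known, the Jacobiator is a triderivation. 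So your generator-level Jacobi check suffices: Jacobi for $[-,-]_L$, $\rho$ a Lie homomorphism, and $\{A,A\}=0$.
\item The degree bound then follows as you say.
\end{itemize}

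\textbf{Two bookkeeping remarks.}
\begin{itemize}
\item The first-slot check $\{aX,Z\}=a\{X,Z\}+X\{a,Z\}$ with $Z\in L$ needs \eqref{eq:LR} in the form $[aX,Z]_L=a[X,Z]_L-\rho(Z)(a)X$, not only $A$-linearity of $\rho$. Nothing is actually missing: by skew-symmetry of the generator values, this check is the same as a second-slot check already done in Step 1.
\item Drop the provisional formula with unspecified ``$\rho(\cdots)$-terms'' at the start of Step 2; the generators-and-relations argument replaces it.
\end{itemize}

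\textbf{Comparison with the other standard route.} One can instead obtain the bracket from commutators in the universal enveloping algebra via $\operatorname{gr}U(A,L)\cong\Sym_A(L)$. That route relies on Rinehart's PBW theorem and hence on projectivity of $L$. Your direct argument needs no projectivity hypothesis. This matches the generality in which the paper states and later uses the proposition: Theorem~\ref{thm:first-order-truncation} and the first half of Proposition~\ref{prop:higher-truncations} assume no projectivity.
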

	
	\subsection{First-order and higher truncations}
	
	Let
	\[
	I=\bigoplus_{n\geq1}\Sym_A^n(L)
	\]
	be the positive-degree ideal of $\Sym_A(L)$.
	
	\begin{theorem}\label{thm:first-order-truncation}
		As commutative algebras,
		\[
		\Sym_A(L)/I^2\simeq A\oplus L,
		\]
		with the product \eqref{eq:F-product}.  The ideal $I^2$ is a Poisson ideal of
		$\Sym_A(L)$ if and only if
		\[
		\Pcal_a(X,Y)=\rho(Y)(a)X+\rho(X)(a)Y=0
		\]
		for all $a\in A$ and $X,Y\in L$.
	\end{theorem}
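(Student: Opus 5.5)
First I would establish the isomorphism of commutative algebras. Since $I^2=\bigoplus_{n\geq2}\Sym_A^n(L)$ (products of positive-degree elements have degree at least $2$, and conversely $\Sym_A^n(L)$ for $n\geq2$ is spanned by products $X_1\cdots X_n$ with each $X_i\in I$), the quotient $\Sym_A(L)/I^2$ is $\Sym_A^0(L)\oplus\Sym_A^1(L)=A\oplus L$ as an $A$-module. The product is computed on representatives: $a\cdot b=ab$ and $a\cdot Y=aY$ hold in $\Sym_A(L)$, and $X\cdot Y\in\Sym^2_A(L)\subseteq I^2$ vanishes in the quotient. This gives exactly \eqref{eq:F-product}.

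Next I would deal with the Poisson ideal condition. The plan is to show that $I^2$ is a Poisson ideal if and only if $\{f,I^2\}\subseteq I^2$ for $f$ in a set of algebra generators. Because $\{f,-\}$ is a derivation, it suffices to check $f\in A\cup L$. Since $\{g,-\}$ is a derivation and $I^2$ is an ideal, $\{g,I^2\}\subseteq I^2$ holds as soon as $\{g,I^2\}\subseteq I^2$ on generators $g$, because $\{gh,u\}=g\{h,u\}+h\{g,u\}$. In the other direction, for $u$ I would note that $I^2$ is spanned, as an ideal, by products $XY$ with $X,Y\in L$, and that $\{f,pXY\}=\{f,p\}XY+p\{f,XY\}$. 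So everything reduces to checking when $\{f,XY\}\in I^2$ for $f\in A\cup L$ and $X,Y\in L$. For $f=Z\in L$, the degree bound in Proposition~\ref{prop:linear-poisson-prelim} gives $\{Z,XY\}\in\Sym^2_A(L)\subseteq I^2$ automatically. For $f=a\in A$ we get
\[
\{a,XY\}=\{a,X\}Y+X\{a,Y\}=-\rho(X)(a)Y-\rho(Y)(a)X\in\Sym^1_A(L).
\]
Since $\Sym^1_A(L)\cap I^2=0$, this element lies in $I^2$ exactly when $\rho(Y)(a)X+\rho(X)(a)Y=0$ in $L$.

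Finally I would identify this expression with $\Pcal_a(X,Y)$ computed in Lemma~\ref{lem:Leibnizator-formula}, taking $a+0$, $0+X$ and $0+Y$. This gives $\Pcal_a(X,Y)=\rho(X)(a)Y+\rho(Y)(a)X$, which matches the stated condition. The main point needing care is the reduction to generators, in particular the grading argument: one has to confirm that $\Sym^1_A(L)\cap I^2=0$, which is a direct-sum statement in the graded algebra, and that both entries of the bracket can be reduced to generators using the derivation property together with antisymmetry. Neither step should present a real obstacle.
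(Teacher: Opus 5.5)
Your proposal is correct and follows essentially the same route as the paper. You identify $I^2=\bigoplus_{n\geq2}\Sym_A^n(L)$ to get the quotient, use the biderivation property to reduce the Poisson-ideal condition to brackets of the ideal generators $XY$ with algebra generators from $A\cup L$, and note that $\{a,XY\}$ is a degree-one element equal (up to sign) to $\Pcal_a(X,Y)$, so it lies in $I^2$ only when it vanishes. The only cosmetic difference is the case $f=Z\in L$: you use the degree bound $\{\Sym_A^1(L),\Sym_A^2(L)\}\subseteq\Sym_A^2(L)$, where the paper writes out $\{XY,Z\}=X[Y,Z]_L+Y[X,Z]_L$ explicitly.
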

	
	\begin{proof}
		The quotient statement follows from the symmetric grading:
		\[
		\Sym_A(L)/I^2\simeq \Sym_A^0(L)\oplus \Sym_A^1(L)=A\oplus L.
		\]
		The induced multiplication is \eqref{eq:F-product}.
		
		For $X,Y\in L$ and $a\in A$, the Poisson Leibniz rule in $\Sym_A(L)$ gives
		\[
		\{XY,a\}=X\{Y,a\}+Y\{X,a\}
		=\rho(Y)(a)X+\rho(X)(a)Y.
		\]
		By Lemma \ref{lem:Leibnizator-formula}, the same expression is
		$\Pcal_a(X,Y)$ in the $F$-manifold algebra $A\oplus L$.  It has
		degree one.  Hence a nonzero value gives
		$\{I^2,\Sym_A(L)\}\not\subseteq I^2$.
		
		Conversely, assume that the displayed expression vanishes for all $a,X,Y$.
		Since the linear Poisson bracket has  degree $-1$, it is enough, by
		the biderivation property, to check brackets of degree-two monomials with
		algebra generators of degrees $0$ and $1$.  The degree-zero case is the
		assumption.  For $Z\in L$,
		\[
		\{XY,Z\}=X\{Y,Z\}+Y\{X,Z\}=X[Y,Z]_L+Y[X,Z]_L\in I^2.
		\]
		Since $I^2$ is generated by degree-two monomials, this proves
		$\{I^2,\Sym_A(L)\}\subseteq I^2$.
	\end{proof}

	\begin{remark}
		This gives another interpretation of Remark \ref{rem:LV-poisson-claim}: the
		linear Poisson bracket need not descend to the first-order truncation.
	\end{remark}
	
	The Lie--Rinehart differential on functions is the $\kk$-linear map
	\[
	d_L:A\longrightarrow L^\vee:=\operatorname{Hom}_A(L,A),
	\qquad
	(d_La)(X)=\rho(X)(a).
	\]
	For $n\geq1$, define an $A$-linear map
	\[
	j_n:L^\vee\longrightarrow
	\operatorname{Hom}_A\bigl(\Sym_A^n(L),\Sym_A^{n-1}(L)\bigr)
	\]
	by
	\begin{equation}\label{eq:jn-map}
		j_n(\alpha)(X_1\cdots X_n)
		=\sum_{i=1}^n \alpha(X_i)
		X_1\cdots\widehat{X_i}\cdots X_n,
	\end{equation}
	where the empty product for $n=1$ is $1\in A$.  The right-hand side is
	symmetric and $A$-multilinear in $X_1,\ldots,X_n$, so $j_n$ is well defined.
	
	\begin{proposition}\label{prop:higher-truncations}
		For every $n\geq1$ and $a\in A$,
		\begin{equation}\label{eq:higher-bracket-differential}
			\{X_1\cdots X_n,a\}
			=j_n(d_La)(X_1\cdots X_n).
		\end{equation}
		Consequently, $I^n$ is a Poisson ideal of $\Sym_A(L)$ if and only if
		\begin{equation}\label{eq:higher-poisson-condition}
			j_n(d_La)=0
			\qquad\text{for every }a\in A.
		\end{equation}
		If $L$ is a finite projective $A$-module of positive constant rank, then, for
		each $n\geq1$, condition \eqref{eq:higher-poisson-condition} is equivalent to
		$\rho=0$.
	\end{proposition}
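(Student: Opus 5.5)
The proof has three parts.

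\emph{The formula.} Since $\{-,a\}$ is a derivation of $\Sym_A(L)$, I would apply the Leibniz rule repeatedly to the monomial $X_1\cdots X_n$. This gives
\[
\{X_1\cdots X_n,a\}=\sum_{i}\{X_i,a\}X_1\cdots\widehat{X_i}\cdots X_n .
\]
By \eqref{eq:linear-poisson}, each factor $\{X_i,a\}$ equals $\rho(X_i)(a)=(d_La)(X_i)$. The right-hand side is therefore $j_n(d_La)(X_1\cdots X_n)$ by \eqref{eq:jn-map}. This is a formal induction on $n$.

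\emph{The ideal criterion.} I would follow the proof of Theorem \ref{thm:first-order-truncation}. The ideal $I^n$ is generated by the monomials $X_1\cdots X_n$, and it equals $\bigoplus_{m\ge n}\Sym^m_A(L)$. Since $\{u,-\}$ is a derivation, for a generator $u$ we get $\{u,fg\}=\{u,f\}g+f\{u,g\}$; the second term lies in $I^n$ because $I^n$ is an ideal. So it suffices to check $\{u,x\}\in I^n$ for generators $u$ of $I^n$ and algebra generators $x\in A\cup L$ of $\Sym_A(L)$. Brackets $\{vu,x\}$ are then handled by the Leibniz rule in the first slot.

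For $x=Z\in L$, the bracket $\{X_1\cdots X_n,Z\}=\sum_i X_1\cdots[X_i,Z]_L\cdots X_n$ has degree $n$, so it lies in $I^n$. For $x=a\in A$, the bracket has degree $n-1$ by the formula above. It lies in $I^n$ iff it is zero, i.e.\ iff $j_n(d_La)$ vanishes on every monomial. Since these monomials span $\Sym^n_A(L)$ over $A$, this is exactly \eqref{eq:higher-poisson-condition}. Both directions follow.

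\emph{The rank condition.} This step is the main obstacle. If $\rho=0$ then $d_La=0$, so the condition holds. For the converse, suppose $\alpha=d_La\ne0$ but $j_n(\alpha)=0$; I want a contradiction. I would localize at a prime $\mathfrak p$ where $\alpha_{\mathfrak p}\neq0$. Such a prime exists because $L^\vee$ is finite projective, so $L^\vee\to\prod_{\mathfrak p}L^\vee_{\mathfrak p}$ is injective. Moreover $j_n$ is $A$-linear and compatible with localization.

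Over $A_{\mathfrak p}$ the module $L$ is free with basis $e_1,\dots,e_r$, $r\ge1$, and some $\alpha(e_k)=c\ne0$. Evaluate on $e_k^n$:
\[
j_n(\alpha)(e_k^n)=n\,c\,e_k^{n-1}.
\]
Here $n$ is invertible because $\operatorname{char}\kk=0$, and $e_k^{n-1}$ is a basis element of the free module $\Sym^{n-1}(A_{\mathfrak p}^r)$. So this value is nonzero, a contradiction.

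Hence $d_La=0$ for all $a$, i.e.\ $\rho(X)(a)=0$ for all $X$ and $a$, so $\rho=0$. The points to watch are the injectivity into localizations and the characteristic-zero factor $n$; positive rank is needed so that a basis vector $e_k$ exists.
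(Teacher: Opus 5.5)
Your proposal is correct and follows essentially the same route as the paper. It uses the same three steps: the Leibniz-rule formula, the degree argument reducing the Poisson-ideal question to brackets of degree-$n$ monomials with $A$, and injectivity of $j_n$ via localization and evaluation on $e_k^n$ (you argue by contradiction at a prime where $\alpha_{\mathfrak p}\neq0$, while the paper shows $\alpha_{\mathfrak p}=0$ at every prime). Two small points: injectivity of $M\to\prod_{\mathfrak p}M_{\mathfrak p}$ holds for every module, so projectivity of $L^\vee$ is not what justifies it, and your handling of $f\{u,g\}$ should be phrased as an induction on the length of the product rather than resting only on ``$I^n$ is an ideal''.
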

	
	\begin{proof}
		Formula \eqref{eq:higher-bracket-differential} follows by applying the Poisson
		Leibniz rule to the product $X_1\cdots X_n$.  Since
		\[
		I^n=\bigoplus_{m\geq n}\Sym_A^m(L)
		\]
		and the bracket has  degree $-1$, only the bracket of its degree-$n$
		part with $A$ can leave $I^n$.  This bracket vanishes precisely when
		\eqref{eq:higher-poisson-condition} holds; the biderivation property and the
		fact that $A\cup L$ generates $\Sym_A(L)$ then give the claimed criterion.
		
		Now suppose that $L$ is a finite projective $A$-module of constant rank
		$r>0$.  The map $j_n$ is injective.  Indeed, after localization at any prime ideal
		$\mathfrak p$ of $A$, choose a basis $e_1,\ldots,e_r$ of $L_{\mathfrak p}$.
		If $j_n(\alpha)=0$, then
		\[
		0=j_n(\alpha_{\mathfrak p})(e_i^n)
		=n\alpha_{\mathfrak p}(e_i)e_i^{n-1},
		\]
		so $\alpha_{\mathfrak p}(e_i)=0$ for every $i$; here $n$ is invertible
		because $\operatorname{char}\kk=0$ and $e_i^{n-1}$ is a basis monomial.
		Thus $\alpha$ vanishes after localization at every prime ideal and hence is zero.
		It follows that \eqref{eq:higher-poisson-condition} is
		equivalent to $d_La=0$ for every $a$, which is equivalent to $\rho=0$.
	\end{proof}
	
	\begin{remark}\label{rem:coisotropic-not-poisson}
		The positive-degree ideal always satisfies $\{I,I\}\subseteq I$. This is weaker than being a
		Poisson ideal.  If $L$ is a finite projective $A$-module of positive constant
		rank and the anchor is nonzero, Proposition \ref{prop:higher-truncations}
		shows that none of the quotients $\Sym_A(L)/I^n$ inherits the linear Poisson
		bracket.
	\end{remark}
	
	\begin{remark}\label{rem:graded-poisson}
		More generally, let $B=\bigoplus_{m\geq0}B_m$ be a graded commutative Poisson
		algebra whose bracket has degree $-1$.  Then $A=B_0$ and $L=B_1$ form a
		Lie--Rinehart algebra: the product gives the $A$-module structure, the
		restrictions of the bracket give the anchor and the bracket on $L$, and the
		Poisson Leibniz rule gives the Lie--Rinehart identities.  Equip $B_0\oplus B_1$
		with the product $(a+X)\bullet(b+Y)=ab+aY+bX$, where $a,b\in B_0$ and
		$X,Y\in B_1$.  This gives the $F$-manifold algebra associated with the
		Lie--Rinehart structure above.
	\end{remark}
	
	\subsection{The Leibnizator and recovery of the anchor}
	
	For $n=2$, write $j=j_2$; explicitly, this is the $A$-linear map
	\[
	j:L^\vee\longrightarrow \operatorname{Hom}_A(\Sym_A^2(L),L)
	\]
	defined by
	\begin{equation}\label{eq:j-map}
		j(\alpha)(X,Y)=\alpha(X)Y+\alpha(Y)X.
	\end{equation}
	
	\begin{proposition}
		\label{prop:leibnizator-differential}
		Let $(A\oplus L,\bullet,[-,-])$ be the $F$-manifold algebra associated with a
		Lie--Rinehart algebra $(A,L)$.  Then the Leibnizator has only one possible
		nonzero component,
		\[
		A\times\Sym_A^2(L)\longrightarrow L,
		\]
		and it is given by
		\begin{equation}\label{eq:P-jd}
			\Pcal_a=j(d_La),\qquad a\in A.
		\end{equation}
		
		Consequently, the Hertling--Manin identity follows from the Leibniz rule
		\[
		d_L(ab)=a\,d_Lb+b\,d_La.
		\]
	\end{proposition}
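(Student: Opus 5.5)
The plan is to read everything off Lemma \ref{lem:Leibnizator-formula}. By \eqref{eq:Leibnizator-formula},
\[
\Pcal_{a+X}(b+Y,c+Z)=\rho(Y)(a)Z+\rho(Z)(a)Y .
\]
The right-hand side has three features. It does not depend on $X$, $b$, or $c$. It lies in $L$. It is $A$-bilinear and symmetric in $(Y,Z)$, because $\rho$ is $A$-linear.

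So I will decompose $\Pcal$ trilinearly according to $P=A\oplus L$ in each slot. Every component in which the first argument lies in $L$, or in which either of the last two arguments lies in $A$, vanishes. The only surviving component takes $a\in A$ and $Y,Z\in L$ to $L$. Since it is symmetric and $A$-bilinear in $(Y,Z)$, it factors through $\Sym_A^2(L)$, which gives the map $A\times\Sym_A^2(L)\to L$.

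Next I compare this with \eqref{eq:j-map}. Using $(d_La)(Y)=\rho(Y)(a)$, we get
\[
j(d_La)(Y,Z)=(d_La)(Y)Z+(d_La)(Z)Y=\rho(Y)(a)Z+\rho(Z)(a)Y=\Pcal_a(Y,Z).
\]
This is exactly \eqref{eq:P-jd}.

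For the consequence, I will redo the proof of Theorem \ref{thm:LR-to-FMan} in this language. First, $\Pcal_{(a+X)\bullet(b+Y)}=\Pcal_{ab}$, since only the $A$-component $ab$ matters. By \eqref{eq:P-jd} this equals $j(d_L(ab))$.

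The Leibniz rule for $d_L$ holds because each $\rho(Z)$ is a derivation:
\[
d_L(ab)(Z)=\rho(Z)(ab)=a\rho(Z)(b)+b\rho(Z)(a).
\]
Then $A$-linearity of $j$ gives
\[
j(d_L(ab))=a\,j(d_Lb)+b\,j(d_La)=a\Pcal_b+b\Pcal_a .
\]
Finally, the values of $\Pcal$ lie in $L$, and $L\bullet L=0$. Hence $(a+X)\bullet\Pcal_{b+Y}=a\Pcal_b$, and similarly $(b+Y)\bullet\Pcal_{a+X}=b\Pcal_a$. Together these give \eqref{eq:HM}.

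No step is a real obstacle. The only point needing care is the bookkeeping that the component vanishes when the first argument is in $L$. This comes from the independence of $X$ in the lemma's formula, and it is what lets the argument ignore the $L$-parts $X$ and $Y$ in $(a+X)\bullet(b+Y)$.
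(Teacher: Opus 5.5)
Your proof is correct and follows the paper's argument: both read \eqref{eq:P-jd} directly off Lemma~\ref{lem:Leibnizator-formula} via $(d_La)(Y)=\rho(Y)(a)$, and both obtain $\Pcal_{ab}=a\Pcal_b+b\Pcal_a$ from the Leibniz rule for $d_L$ and the $A$-linearity of $j$. You also write out bookkeeping that the paper leaves implicit: $\Pcal$ depends only on the $A$-component of its first argument, vanishes when either of the other arguments lies in $A$, and takes values in $L$ with $L\bullet L=0$. That is what upgrades $\Pcal_{ab}=a\Pcal_b+b\Pcal_a$ to the full identity \eqref{eq:HM}.
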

	
	\begin{proof}
		
		Formula \eqref{eq:P-jd} is the same identity written using \eqref{eq:j-map}.
		For the last assertion, compute
		\[
		\Pcal_{ab}=j(d_L(ab))=j(a\,d_Lb+b\,d_La)=a\Pcal_b+b\Pcal_a.
		\]
		This identity gives the Hertling--Manin identity for
		$(A\oplus L,\bullet,[-,-])$.
	\end{proof}
	
	\medskip\noindent\textbf{Trace formula.}
	Assume that $L$ is a finite projective $A$-module of constant rank $r$.  Since
	$\operatorname{char}\kk=0$, the scalar $r+1$ is invertible.  For
	$T\in\operatorname{Hom}_A(\Sym_A^2(L),L)$, define
	$\tr:\operatorname{Hom}_A(\Sym_A^2(L),L)\to L^\vee$ by
	\[
	(\tr T)(X)=\operatorname{Tr}\bigl(Y\mapsto T(X,Y)\bigr),
	\]
	where $\operatorname{Tr}:\operatorname{Hom}_A(L,L)\to A$ is the usual trace.
	
	\begin{proposition}\label{prop:trace-formula}
		For every $\alpha\in L^\vee$,
		\begin{equation}\label{eq:trace-j}
			\tr(j(\alpha))=(r+1)\alpha.
		\end{equation}
		Consequently,
		\begin{equation}\label{eq:trace-P}
			\tr(\Pcal_a)=(r+1)d_La,
		\end{equation}
		and hence
		\begin{equation}\label{eq:anchor-trace-formula}
			\rho(X)(a)=\frac1{r+1}\operatorname{Tr}\bigl(Y\mapsto \Pcal_a(X,Y)\bigr).
		\end{equation}
	\end{proposition}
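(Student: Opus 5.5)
The plan is to prove \eqref{eq:trace-j} first and then deduce the other two formulas by substitution. For fixed $X\in L$, the endomorphism $Y\mapsto j(\alpha)(X,Y)=\alpha(X)Y+\alpha(Y)X$ splits as a sum of two $A$-linear maps.
\begin{itemize}
\item The first map is $\alpha(X)\,\id_L$. Its trace is $r\,\alpha(X)$, because $\operatorname{Tr}(\id_L)=r$ for a finite projective module of constant rank $r$.
\item The second map is the rank-one map $Y\mapsto \alpha(Y)X$, which is the image of $\alpha\otimes X$ under $L^\vee\otimes_A L\cong\operatorname{Hom}_A(L,L)$. By the defining property of the trace, its trace is $\alpha(X)$.
\end{itemize}
Adding the two contributions gives $(\tr j(\alpha))(X)=(r+1)\alpha(X)$. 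Since $(\tr T)(X)$ is $A$-linear in $X$, this is an equality in $L^\vee$.

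Some care is needed about which facts on the trace of a finite projective module are being used. If one prefers not to rely on them abstractly, I would verify both trace identities after localizing at each prime $\mathfrak p$, just as in the proof of Proposition \ref{prop:higher-truncations}. There $L_{\mathfrak p}$ is free with basis $e_1,\dots,e_r$, the trace is the usual matrix trace, and the computation is immediate.
\begin{itemize}
\item The matrix of $\id$ is the identity, so its trace is $r$.
\item The matrix of $Y\mapsto\alpha(Y)X$ has entries $\alpha(e_j)x_i$, where $X=\sum x_ie_i$. Its trace is $\sum_i \alpha(e_i)x_i=\alpha(X)$.
\end{itemize}
Trace commutes with localization, and an element of $A$ that vanishes in every $A_{\mathfrak p}$ is zero. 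So the identity holds globally. This compatibility of trace with localization is the only point needing care; the rest is bookkeeping.

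For \eqref{eq:trace-P}, apply \eqref{eq:trace-j} with $\alpha=d_La$, using Proposition \ref{prop:leibnizator-differential}, which says $\Pcal_a=j(d_La)$. Finally, since $r+1$ is invertible in characteristic zero, we can divide by it and evaluate at $X$. Using $(d_La)(X)=\rho(X)(a)$ then yields \eqref{eq:anchor-trace-formula}.
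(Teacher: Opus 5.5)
Your proposal is correct and follows the paper's proof: you split $Y\mapsto\alpha(X)Y+\alpha(Y)X$ into $\alpha(X)\,\id_L$ (trace $r\alpha(X)$) plus the rank-one map (trace $\alpha(X)$), then substitute $\Pcal_a=j(d_La)$ and divide by $r+1$. The only difference is in checking the rank-one trace: the paper uses a dual basis $\{e_i,e^i\}$ rather than localizing to a free module, and both verifications are valid.
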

	
	\begin{proof}
		The endomorphism $Y\mapsto j(\alpha)(X,Y)$ is
		\[
		Y\mapsto \alpha(X)Y+\alpha(Y)X.
		\]
		The first summand is scalar multiplication by $\alpha(X)$ and has trace
		$r\alpha(X)$, since $\operatorname{Tr}(\id_L)=r$.  The second summand is
		the rank-one endomorphism $Y\mapsto \alpha(Y)X$ and has trace
		$\alpha(X)$.  In fact, since $L$ is finitely generated projective, the dual
		basis lemma gives elements $e_i\in L$ and $e^i\in L^\vee$ such that
		$Y=\sum_i e^i(Y)e_i$ for all $Y\in L$.  The trace of
		$\phi\in\operatorname{End}_A(L)$ is
		$\operatorname{Tr}(\phi)=\sum_i e^i(\phi(e_i))$, independently of the
		chosen dual basis.  For $\phi(Y)=\alpha(Y)X$, this gives
		\[
		\operatorname{Tr}(\phi)
		=\sum_i e^i\bigl(\alpha(e_i)X\bigr)
		=\sum_i \alpha(e_i)e^i(X)
		=\alpha\Bigl(\sum_i e^i(X)e_i\Bigr)
		=\alpha(X).
		\]
		Equivalently, $\phi$ is the image of $X\otimes\alpha$ under
		$L\otimes_A L^\vee\simeq\operatorname{End}_A(L)$, and the trace is the
		evaluation pairing $X\otimes\alpha\mapsto\alpha(X)$.  This proves
		\eqref{eq:trace-j}.  The remaining assertions follow from
		Proposition \ref{prop:leibnizator-differential}.
	\end{proof}
	
	\begin{corollary}\label{cor:faithful-rigidity}
		Let $L$ be a finite projective $A$-module of constant rank.  Consider two
		Lie--Rinehart structures on the same pair $(A,L)$, and denote the
		Leibnizators of their associated $F$-manifold algebras by $\Pcal^{(1)}$ and
		$\Pcal^{(2)}$.  If $\Pcal^{(1)}=\Pcal^{(2)}$, then the two anchors coincide.
		If this common anchor is injective, then the two Lie--Rinehart structures
		coincide.
	\end{corollary}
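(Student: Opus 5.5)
The first assertion will come straight from the trace formula. Write $\rho_1,\rho_2$ for the two anchors and $[-,-]_1,[-,-]_2$ for the two brackets on $L$, with associated Lie--Rinehart differentials $d_L^{(1)},d_L^{(2)}$. Let $r$ be the constant rank of $L$.

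By Proposition \ref{prop:trace-formula}, formula \eqref{eq:anchor-trace-formula}, each anchor is determined by its Leibnizator:
\[
\rho_k(X)(a)=\frac1{r+1}\operatorname{Tr}\bigl(Y\mapsto \Pcal^{(k)}_a(X,Y)\bigr),\qquad k=1,2.
\]
The right-hand side involves only $\Pcal^{(k)}$, the module structure and the trace. If $\Pcal^{(1)}=\Pcal^{(2)}$, the right-hand sides agree, so $\rho_1=\rho_2$. Here $r+1$ is invertible because $\operatorname{char}\kk=0$. The case $r=0$ is also covered: then $L=0$, and everything is trivial.

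For the second assertion, write $\rho$ for the common anchor, assumed injective. Set $D(X,Y)=[X,Y]_1-[X,Y]_2$. Subtracting the two Leibniz rules \eqref{eq:LR}, the terms $\rho(X)(a)Y$ cancel, so $D(X,aY)=aD(X,Y)$. By antisymmetry, $D$ is then $A$-bilinear.

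Next use that each anchor is a Lie algebra homomorphism:
\[
\rho([X,Y]_k)=[\rho(X),\rho(Y)],\qquad k=1,2.
\]
Since the right-hand side is the same commutator of derivations for $k=1,2$, we get $\rho(D(X,Y))=0$. Injectivity of $\rho$ then gives $D=0$, so the brackets coincide and the two Lie--Rinehart structures are equal.

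The only point needing care is to interpret ``Lie--Rinehart structure on the same pair $(A,L)$'' as fixing the $A$-module structure, so that only the anchor and bracket vary. Then the trace formula applies verbatim to both structures, and the bracket step uses only that $\rho$ is a homomorphism. The main potential obstacle is therefore just this bookkeeping; no localization argument is needed.
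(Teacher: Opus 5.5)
Your proposal is correct and follows the paper's proof. The anchors agree by the trace formula \eqref{eq:anchor-trace-formula}, and the brackets agree because $\rho([X,Y]_1)=[\rho(X),\rho(Y)]=\rho([X,Y]_2)$ together with injectivity of $\rho$; your observation that $D$ is $A$-bilinear is true but never used, so it can be dropped.
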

	
	\begin{proof}
		Formula \eqref{eq:anchor-trace-formula} recovers the anchor from the
		Leibnizator, so the anchors coincide.  Let $\rho$ be their common anchor and
		write the two brackets as $[-,-]_1$ and $[-,-]_2$.  Since the anchor of a
		Lie--Rinehart algebra is a Lie algebra homomorphism,
		\[
		\rho([X,Y]_1)=[\rho(X),\rho(Y)]=\rho([X,Y]_2).
		\]
		Injectivity of $\rho$ therefore implies $[X,Y]_1=[X,Y]_2$.
	\end{proof}

	We finish the section with two canonical Poisson subalgebras of the associated
	$F$-manifold algebra.  Here a Poisson subalgebra means a subspace closed under
	both operations, with vanishing restricted Leibnizator.  Let
	\[
	A^L=\{a\in A\mid \rho(X)(a)=0\text{ for all }X\in L\}
	\]
	be the invariant subalgebra, and let
	\[
	K=\ker\rho\subseteq L.
	\]
	
	\begin{proposition}\label{prop:Poisson-subalgebras}
		The subspaces
		\[
		A^L\oplus L
		\qquad\text{and}\qquad
		A\oplus K
		\]
		are Poisson subalgebras of the $F$-manifold algebra
		$(A\oplus L,\bullet,[-,-])$.  Their intersection is $A^L\oplus K$.
	\end{proposition}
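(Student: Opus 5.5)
We verify three things for each subspace: it is closed under $\bullet$, it is closed under $[-,-]$, and the restricted Leibnizator vanishes. The Leibnizator is handled by Lemma \ref{lem:Leibnizator-formula}. The intersection claim is then pure linear algebra.

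For $A^L\oplus L$, we first check closure under the product. The set $A^L$ is a subalgebra of $A$ because each $\rho(X)$ is a derivation, so $\rho(X)(ab)=a\rho(X)(b)+b\rho(X)(a)=0$ for $a,b\in A^L$, and $1\in A^L$. Hence $(a+X)\bullet(b+Y)=ab+aY+bX\in A^L\oplus L$. For the bracket, $[a+X,b+Y]=\rho(X)(b)-\rho(Y)(a)+[X,Y]_L=[X,Y]_L\in L$, since $a,b\in A^L$. For the Leibnizator, Lemma \ref{lem:Leibnizator-formula} gives $\Pcal_{a+X}(b+Y,c+Z)=\rho(Y)(a)Z+\rho(Z)(a)Y$, which is $0$ because $a\in A^L$.

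For $A\oplus K$, we need $K$ to be an $A$-submodule and a Lie subalgebra. The first holds by $A$-linearity of $\rho$: $\rho(aZ)=a\rho(Z)=0$. The second holds because $\rho$ is a Lie homomorphism: $\rho([Y,Z]_L)=[\rho(Y),\rho(Z)]=0$. Then $(a+Y)\bullet(b+Z)=ab+aZ+bY\in A\oplus K$. For the bracket, $[a+Y,b+Z]=\rho(Y)(b)-\rho(Z)(a)+[Y,Z]_L=[Y,Z]_L\in K$. For the Leibnizator, the formula of Lemma \ref{lem:Leibnizator-formula} evaluated at $Y,Z\in K$ gives $\rho(Y)(a)Z+\rho(Z)(a)Y=0$.

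Finally, the intersection of two subspaces of the form $A_1\oplus L_1$ and $A_2\oplus L_2$ inside $A\oplus L$ is $(A_1\cap A_2)\oplus(L_1\cap L_2)$, because the decomposition is direct. Here this gives $(A^L\cap A)\oplus(L\cap K)=A^L\oplus K$. No step is a real obstacle; the only point needing care is that $K$ is both an $A$-submodule and a Lie subalgebra, which is exactly where the $A$-linearity and the homomorphism property of the anchor are used.
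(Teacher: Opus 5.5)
Your proposal is correct and follows essentially the same route as the paper. Closure under $\bullet$ and $[-,-]$ comes from $A^L$ being a subalgebra and $K$ being both an $A$-submodule and a Lie subalgebra (by $A$-linearity and the homomorphism property of $\rho$), and the restricted Leibnizator vanishes by the explicit formula $\rho(Y)(a)Z+\rho(Z)(a)Y$. The differences are cosmetic: you cite Lemma~\ref{lem:Leibnizator-formula} directly where the paper uses $\Pcal_a=j(d_La)$ from Proposition~\ref{prop:leibnizator-differential}, and you spell out the subalgebra property of $A^L$ and the intersection computation, which the paper leaves implicit.
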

	
	\begin{proof}
		For \(A^L\oplus L\), the product is closed because \(A^L\) is a subalgebra of
		\(A\).  It is closed under the bracket because the \(A\)-component of
		\([a+X,b+Y]\) is
		$\rho(X)(b)-\rho(Y)(a)$, which vanishes for $a,b\in A^L$.  Its Leibnizator
		is zero by Proposition \ref{prop:leibnizator-differential}, since $d_La=0$ for
		$a\in A^L$.
		
		For \(A\oplus K\), the product is closed because \(K\) is an \(A\)-submodule
		of \(L\).  Since \(\rho\) is an \(A\)-linear Lie algebra homomorphism, \(K\)
		is also closed under the Lie bracket.  Hence \(A\oplus K\) is closed under the
		bracket.  Its Leibnizator is zero because every \(L\)-component appearing in
		the second and third arguments lies in \(K\) and hence has zero anchor.  The
		intersection is \(A^L\oplus K\) by definition.
	\end{proof}
	
	\begin{remark}
		Geometrically, $A^L$ consists of functions constant along the anchor
		distribution, while $K=\ker\rho$ is the isotropy of the anchor.
	\end{remark}

	\section{Examples}\label{sec:examples}
	
	We illustrate the construction with algebraic and geometric examples.

	\subsection{Lie algebroids}
	
	Let $E\to M$ be a Lie algebroid over a smooth manifold $M$, with anchor
	$\rho:E\to TM$ and bracket $[-,-]_E$ on sections.  Then
	\[
	A=C^\infty(M),
	\qquad
	L=\Gamma(E)
	\]
	is a Lie--Rinehart algebra.  The construction above equips the vector space
	\[
	C^\infty(M)\oplus\Gamma(E)
	\]
	with the $F$-manifold algebra structure whose product is
	\begin{equation}\label{eq:lie-algebroid-F-product}
		(f+X)\bullet(g+Y)=fg+fY+gX
	\end{equation}
	and whose bracket is
	\begin{equation}\label{eq:lie-algebroid-F-bracket}
		[f+X,g+Y]
		=\rho(X)(g)-\rho(Y)(f)+[X,Y]_E.
	\end{equation}
	Its Leibnizator is
	\begin{equation}\label{eq:lie-algebroid-Leibnizator}
		\Pcal_{f+X}(g+Y,h+Z)=\rho(Y)(f)Z+\rho(Z)(f)Y.
	\end{equation}
	The linear Poisson structure associated with a Lie algebroid is defined on
	\(E^*\), and \(C^\infty(M)\oplus\Gamma(E)\) is its truncation to fiberwise
	polynomial functions of degree at most one.  The tangent Lie algebroid
	\(E=TM\) gives
	\[
	C^\infty(M)\oplus\mathfrak X(M),
	\]
	while the cotangent Lie algebroid of a Poisson manifold gives the example
	studied below.

	\subsection{The polynomial derivation algebra}
	
	Let
	$
	A=\kk[x]$ and $
	L=A D$, where $
	D=\frac{d}{dx}.
	$
The anchor is the identity map $A D\to\Der_\kk(A)$.  The Lie--Rinehart bracket is
	\[
	[fD,gD]_L=(fg'-gf')D.
	\]
	The associated $F$-manifold algebra is
$
	P=A\oplus A D.
$
	For $a,b,f,g\in A$,
	\begin{equation}\label{eq:poly-product}
		(a+fD)\bullet(b+gD)=ab+(ag+bf)D,
	\end{equation}
	and
	\begin{equation}\label{eq:poly-bracket}
		[a+fD,b+gD]=fb'-ga'+(fg'-gf')D.
	\end{equation}
	This is the semidirect product bracket associated with the identity anchor.
	
	The Leibnizator is
	\begin{equation}\label{eq:poly-Leibnizator}
		\Pcal_{a+fD}(b+gD,c+hD)=a'(gh+hg)D.
	\end{equation}
	In characteristic zero this is $2a'ghD$.  Taking $a=x$ and $g=h=1$ gives
	\[
	\Pcal_x(D,D)=2D\neq0.
	\]
	Thus this algebra is not Poisson, but it is an $F$-manifold algebra by
	Theorem \ref{thm:LR-to-FMan}.
	
	Set
	\[
	u_r=x^r\in A\quad(r\geq0),
	\qquad
	e_m=x^{m+1}D\in A D\quad(m\geq-1).
	\]
	For $r,s\geq0$ and $m,n\geq-1$,
	\[
	u_r\bullet u_s=u_{r+s},
	\qquad
	u_r\bullet e_m=e_{r+m},
	\qquad
	e_m\bullet e_n=0.
	\]
	In this Witt-type basis, the brackets are
	\begin{align*}
		[e_m,e_n]&=(n-m)e_{m+n}\quad(m+n\geq-1),\\
		[e_m,u_r]&=r u_{m+r}\quad(m+r\geq0),\\
		[u_r,u_s]&=0.
	\end{align*}
	The two boundary cases excluded by these conditions are
	\([e_{-1},e_{-1}]=0\) and
	\([e_{-1},u_0]=0\).
	Moreover,
	\[
	\Pcal_{u_r}(e_m,e_n)
	=r\bigl(e_{m+n+r}+e_{n+m+r}\bigr)
	=2r e_{m+n+r}
	\]
	for \(r\geq1\); for \(r=0\) it is zero.
	
	\subsection{The cotangent \texorpdfstring{$F$}{F}-manifold algebra of a Poisson algebra}
	
	Let $(A,\{-,-\})$ be a Poisson algebra.  Its cotangent Lie--Rinehart algebra
	is $(A,\Omega^1_{A/\kk})$, in which the anchor $\pi^\sharp$ is determined by
	\begin{equation}\label{eq:cot-anchor}
		\pi^\sharp(df)(g)=\{f,g\},
	\end{equation}
	and the Lie bracket $[-,-]_\pi$ is determined by
	\begin{equation}\label{eq:cot-exact}
		[df,dg]_{\pi}=d\{f,g\}.
	\end{equation}
	  Equivalently, for exact generators,
	\begin{equation}\label{eq:cot-bracket-adf}
		[a\,df,b\,dg]_{\pi}
		=ab\,d\{f,g\}+a\{f,b\}\,dg+b\{a,g\}\,df.
	\end{equation}

	The associated $F$-manifold algebra is
	\[
	\Fcot(A):=A\oplus\Omega^1_{A/\kk}.
	\]
	Its product is
	\begin{equation}\label{eq:cot-F-product}
		(a+\alpha)\bullet(b+\eta)=ab+a\eta+b\alpha,
	\end{equation}
	and its bracket is
	\begin{equation}\label{eq:cot-F-bracket}
		[a+\alpha,b+\eta]
		=\pi^\sharp(\alpha)(b)-\pi^\sharp(\eta)(a)+[\alpha,\eta]_{\pi}.
	\end{equation}

	The Leibnizator is
	\begin{equation}\label{eq:cot-Leibnizator}
		\Pcal_{a+\alpha}(b+\eta,c+\gamma)
		=\pi^\sharp(\eta)(a)\gamma+\pi^\sharp(\gamma)(a)\eta.
	\end{equation}
	For exact forms,
	\begin{equation}\label{eq:cot-exact-Leibnizator}
		\Pcal_a(df,dg)=\{f,a\}\,dg+\{g,a\}\,df.
	\end{equation}
	If $\Omega^1_{A/\kk}$ is a finite projective $A$-module of constant rank $r$, then
	Proposition \ref{prop:trace-formula} gives the Poisson bracket in terms of the
	Leibnizator:
	\begin{equation}\label{eq:poisson-from-cotangent-trace}
		\{f,a\}=\frac1{r+1}\operatorname{Tr}\bigl(\eta\mapsto\Pcal_a(df,\eta)\bigr).
	\end{equation}
	Thus, in the finite projective case, the cotangent $F$-manifold algebra
	determines the original Poisson tensor from its Leibnizator.
	
%
%
%

	\subsection{The symplectic affine plane}
	
	Let
	$
	A=\kk[x,y]$ with the Poisson bracket determined by $\{x,y\}=1.$
	Then
	\[
	\rho(dx)=\{x,-\}=\partial_y,
	\qquad
	\rho(dy)=\{y,-\}=-\partial_x,
	\]
	and
	\[
	[dx,dy]_{\pi}=d\{x,y\}=0.
	\]
	The associated $F$-manifold  algebra is
$
	A\oplus A\,dx\oplus A\,dy.
$
	Its bracket satisfies
	\[
	[dx,a]=\partial_y a,
	\qquad
	[dy,a]=-\partial_x a,
	\qquad
	[dx,dy]=0.
	\]
	For two general one-forms
	$
	\alpha=f\,dx+g\,dy,
	~
	\eta=h\,dx+k\,dy,
$
	one has
	\[
	\begin{aligned}
		[\alpha,\eta]_{\pi}
		={}&\bigl(f\partial_yh-h\partial_yf-g\partial_xh+k\partial_xf\bigr)dx +\bigl(f\partial_yk-h\partial_yg-g\partial_xk+k\partial_xg\bigr)dy.
	\end{aligned}
	\]
	The full bracket is therefore
	\[
	\begin{aligned}
		[a+\alpha,b+\eta]
		={}&(f \partial_yb-g \partial_xb-h \partial_ya+k\partial_x a) + [\alpha,\eta]_{\pi}.
	\end{aligned}
	\]
	Finally, the Leibnizator on the coordinate one-forms is
	\begin{equation}\label{eq:symplectic-plane-Leibnizator}
		\Pcal_a(dx,dy)=(\partial_y a)dy-(\partial_x a)dx.
	\end{equation}
	Taking $a=x$ in \eqref{eq:symplectic-plane-Leibnizator} gives
	$\Pcal_x(dx,dy)=-dx\neq0$.  Thus this $F$-manifold algebra is not Poisson.
	
	\vspace{3mm}
	
	\noindent
	\textbf{Acknowledgments.} This research was supported by the Natural Science
	Foundation of Jilin Province (grant no.~20260101013JJ) and the National Natural
	Science Foundation of China (grant nos.~12471060 and W2412041).  The authors
	thank Yunnan Li for helpful discussions.
	
	\smallskip
	\noindent
	\textbf{Declaration of interests.} The authors have no conflicts of interest
	to disclose.
	
	\smallskip
	\noindent
	\textbf{Data availability.} No data were generated or analyzed in this study.

\end{document}